\pgfplotsset{compat=newest}
\pgfplotsset{plot coordinates/math parser=false}
\newcommand{\abs}[1]{\ensuremath{\left|#1\right|}}
\newcommand{\positive}[1]{\ensuremath{\left[#1\right]^{+}}}
\newcommand{\leqone}[1]{\ensuremath{\left[#1\right]^{\leq 1}}}
\newcommand{\imag}{\ensuremath{\mathrm{i}}}
\newcommand{\X}{\ensuremath{\bm{X}}}
\newcommand{\Y}{\ensuremath{\bm{Y}}}
\newcommand{\Z}{\ensuremath{\bm{Z}}}
\newcommand{\lx}{\ensuremath{{\lambda_{x}}}}
\newcommand{\ly}{\ensuremath{{\lambda_{y}}}}
\renewcommand{\Pr}{\ensuremath{\mathbb{P}}}
\theoremstyle{plain}%
\newtheorem{thm}{Theorem}%
\theoremstyle{definition}
\newtheorem{defn}{Definition}%
\theoremstyle{remark}
\newtheorem*{rem*}{Remark}
\newtheorem{example}{Example}
\definecolor{plot2}{HTML}{127933}
\definecolor{plot1}{HTML}{242bb3}
\definecolor{plot0}{HTML}{F04D4A}
\definecolor{plot3}{HTML}{cc9f2b}
\definecolor{plot4}{HTML}{27AFB3}
\definecolor{change}{HTML}{0096b8}%
\title{Copula-Based Bounds for Multi-User Communications -- Part II: Outage Performance}
\author{Karl-Ludwig Besser, \IEEEmembership{Student Member, IEEE} and Eduard A. Jorswieck, \IEEEmembership{Fellow, IEEE}
\thanks{The authors are with the Institute of Communications Technology, Technische Universit\"at Braunschweig, 38106 Braunschweig, Germany (email: \{{k.besser}, {e.jorswieck}\}@tu-bs.de).}
\thanks{This work is supported in part by the German Research Foundation (DFG) under grant JO\,801/23-1.}
}
\begin{document}
\maketitle

\begin{abstract}%
	In the first part of this two-part letter, we introduced methods to study the impact of dependency on the expected value of functions of two random variables. In this second part, we present tools to derive worst- and best-case bounds on the outage probability of multi-user communication systems, including multiple access channels, wiretap channels, and reconfigurable intelligent surface-assisted channels.
\end{abstract}

\begin{IEEEkeywords}
Copula, Joint distributions, Fading channels, Outage probability, Slow fading.
\end{IEEEkeywords}

\section{Introduction}

In the first part of this two-part letter \cite{Besser2020part1}, we have presented basic tools, methods, and results to study the {expected value} of functions of dependent random variables. In this second part, we  {introduce methods} to provide bounds on the outage performance of state-of-the-art multi-user communication systems. The letter is organized as follows: first, bounds on {the joint probability} based on {copula theory} are {provided}. In Section~\ref{sec:bounds-outage-probability}, results on the outage performance {of different multi-user communication systems} are derived. The letter is concluded in Section~\ref{sec:conclusion}.
\section{Bounds on the Joint Probability}\label{sec:probability-bounds}

Copulas play a central role in the theory of joint distributions. We will therefore start this section with a brief introduction and some important facts about them.
Afterwards, we will introduce applications of copula theory for bounds on the probability of a function of random variables, e.g., the sum of two random variables.
{The presented methods will be illustrated with some simple examples.}

\subsection{Copulas}\label{sub:copulas}
One of the main applications of copulas is in the area of finance and risk management~\cite{Embrechts2002,McNeil2015}. However, we will see in {Section~\ref{sec:bounds-outage-probability}} of this work that they are also very useful in communications to derive bounds on various performance metrics, e.g., the outage probability.

All of the basic properties of copulas, which are presented in the following, can be found in \cite{Nelsen2006}.
We start with the definition of a two-dimensional copula.
\begin{defn}[{Two-dimensional Copula~\cite[Def.~2.2.2]{Nelsen2006}}]\label{def:copula}
	A \emph{two-dimensional copula} is a function $C: [0, 1]^2\to[0, 1]$ with the following properties
	\begin{enumerate}
		\item For every $a, b\in[0, 1]$
		\begin{equation*}
			C(a, 0) = 0 = C(0, b)
		\end{equation*}
		and
		\begin{equation*}
		C(a, 1) = a \quad\text{and}\quad C(1, b) = b\,;
		\end{equation*}
		\item For every $a_1, a_2, b_1, b_2\in[0, 1]$ such that $a_1\leq a_2$ and $b_1\leq b_2$
		\begin{equation*}
		C(a_2, b_2) - C(a_2, b_1) - C(a_1, b_2) + C(a_1, b_1) \geq 0\,.
		\end{equation*}
	\end{enumerate}
\end{defn}

From this, we can see that every copula is a distribution function with standard uniform marginals. The following theorem is one of the central theorems of copula theory. It shows how copulas connect the marginal distributions to a joint distribution and therefore specify a dependency structure between the marginals.

\begin{thm}[{Sklar's Theorem~\cite[Thm.~2.3.3]{Nelsen2006}}]\label{thm:sklar}
	Let $H$ be a joint distribution function with margins $F_{\X}$ and $F_{\Y}$. Then there exists a copula $C$ such that for all $x, y\in\bar{\mathbb{R}}$
	\begin{equation}\label{eq:sklar-joint-dist}
	H(x, y) = C(F_{\X}(x), F_{\Y}(y))\,.
	\end{equation}
	If $F_{\X}$ and $F_{\Y}$ are continuous, then $C$ is unique. Conversely, if $C$ is a copula and $F_{\X}$ and $F_{\Y}$ are distribution functions, then $H$ defined by \eqref{eq:sklar-joint-dist} is a joint distribution function with margins $F_{\X}$ and $F_{\Y}$.
\end{thm}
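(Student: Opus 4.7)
The plan is to prove the two directions separately, tackling the converse (given $C$, the function $H$ defined by \eqref{eq:sklar-joint-dist} is a joint distribution) first because it amounts to direct verification, and then the more delicate forward direction. For the forward direction I would first handle the continuous-margins case, which exposes the main construction, and then indicate how to reduce the general case to it via an extension argument.

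Converse direction: Assume $C$ is a copula and $F_{\X}, F_{\Y}$ are univariate distribution functions, and set $H(x,y) = C(F_{\X}(x), F_{\Y}(y))$. The boundary axiom of Definition~\ref{def:copula} immediately yields $H(-\infty, y) = C(0, F_{\Y}(y)) = 0$, $H(\infty, \infty) = C(1,1) = 1$, and $H(x, \infty) = C(F_{\X}(x), 1) = F_{\X}(x)$, giving the correct marginal for $\X$ and symmetrically for $\Y$. The rectangle inequality for $H$ on $[x_1, x_2]\times[y_1, y_2]$ is exactly the 2-increasing inequality of $C$ evaluated at $\bigl(F_{\X}(x_i), F_{\Y}(y_j)\bigr)$, hence non-negative. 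Right-continuity of $H$ follows from that of $F_{\X}, F_{\Y}$ together with the $1$-Lipschitz continuity of every copula, which is a short direct consequence of the two axioms in Definition~\ref{def:copula}.

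Forward direction, continuous case: Given $H$ with continuous margins, define $C(u,v) := H\bigl(F_{\X}^{\leftarrow}(u), F_{\Y}^{\leftarrow}(v)\bigr)$ on $[0,1]^2$, where $F^{\leftarrow}(u) = \inf\{t : F(t) \geq u\}$ denotes the quasi-inverse. Continuity of the margins guarantees $F_{\X}(F_{\X}^{\leftarrow}(u)) = u$ and likewise for $F_{\Y}$, which both verifies that $C$ has standard uniform margins and, upon substituting $u = F_{\X}(x), v = F_{\Y}(y)$, recovers \eqref{eq:sklar-joint-dist}. The 2-increasing property of $C$ is inherited from that of $H$ since the quasi-inverses are monotone. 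Uniqueness is then immediate: any other copula $\tilde{C}$ satisfying \eqref{eq:sklar-joint-dist} must agree with $C$ on $\operatorname{Ran}(F_{\X}) \times \operatorname{Ran}(F_{\Y}) = [0,1]^2$.

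Forward direction, general case: When $F_{\X}$ or $F_{\Y}$ has jumps, the construction above only defines a \emph{sub-copula} on $\operatorname{Ran}(F_{\X}) \times \operatorname{Ran}(F_{\Y})$. The main obstacle is to extend this sub-copula to a full copula on $[0,1]^2$ while preserving \eqref{eq:sklar-joint-dist}. The plan is to first extend by continuity to the closure of the range, using the fact that every sub-copula is $1$-Lipschitz, and then to interpolate bilinearly on each open rectangle that lies in the complement. Verifying that bilinear interpolation preserves the 2-increasing condition on every rectangle of $[0,1]^2$, not just the interpolation rectangles themselves, is the most technical step of the entire proof; once this is in place, the identity \eqref{eq:sklar-joint-dist} on $\operatorname{Ran}(F_{\X}) \times \operatorname{Ran}(F_{\Y})$ is preserved automatically because the extension leaves values on the range untouched.
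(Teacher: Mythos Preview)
The paper does not prove this theorem at all; it simply quotes Sklar's Theorem verbatim from \cite[Thm.~2.3.3]{Nelsen2006} and moves on. So there is no in-paper proof to compare against.

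That said, your outline is essentially the classical argument given in Nelsen's book, and it is correct as a plan: the converse is a direct check of the distribution-function axioms using the defining properties of a copula, the continuous-margins case goes through the quasi-inverse construction $C(u,v)=H(F_{\X}^{\leftarrow}(u),F_{\Y}^{\leftarrow}(v))$, and the general case proceeds by first obtaining a sub-copula on $\operatorname{Ran}(F_{\X})\times\operatorname{Ran}(F_{\Y})$, extending to the closure by Lipschitz continuity, and then filling in the remaining open rectangles by bilinear interpolation. Your identification of the 2-increasing verification for the bilinear extension as the technical crux is accurate; that is exactly where the work lies in Nelsen's Lemma~2.3.5.
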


Next, we present two important copulas which act as bounds on all copulas and are often referred to as Fr\'{e}chet-Hoeffding bounds. Another important copula is the product copula $\Pi(a, b)=ab$ which corresponds to independent marginals.

\begin{thm}[{Fr\'{e}chet-Hoeffding Bounds~\cite[Thm.~2.2.3]{Nelsen2006}}]\label{thm:frechet-hoeffding-bounds}
	Let $C$ be a copula. Then for every $a, b\in[0, 1]$
	\begin{equation*}
		W(a, b) \leq C(a, b) \leq M(a, b)
	\end{equation*}
	with the copulas
	\begin{align}
		W(a, b) &= \max\left[a+b-1, 0\right]\\
		M(a, b) &= \min\left[a, b\right]\,.
	\end{align}
\end{thm}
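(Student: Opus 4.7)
The plan is to extract both inequalities directly from the two defining properties of a copula in Definition~\ref{def:copula}, using the boundary conditions in property~(1) as ``anchors'' inside the $2$-increasing inequality of property~(2). No additional machinery should be needed.

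First I would establish monotonicity of $C$ in each argument as a preliminary step. Applying property~(2) with $b_1=0$, $b_2=b$ and arbitrary $a_1\leq a_2$, and invoking $C(a,0)=0$, yields $C(a_2,b)\geq C(a_1,b)$; the symmetric choice gives monotonicity in the second argument. The upper bound then follows immediately: by monotonicity and the marginal conditions $C(1,b)=b$ and $C(a,1)=a$, we get $C(a,b)\leq C(1,b)=b$ and $C(a,b)\leq C(a,1)=a$, hence $C(a,b)\leq \min(a,b)=M(a,b)$.

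For the lower bound I would derive the two constituent inequalities separately and combine them. Non-negativity $C(a,b)\geq 0$ drops out of property~(2) with $a_1=b_1=0$ together with $C(0,0)=C(0,b)=C(a,0)=0$. To get $C(a,b)\geq a+b-1$, apply property~(2) with $a_1=a$, $a_2=1$, $b_1=b$, $b_2=1$:
\begin{equation*}
C(1,1)-C(1,b)-C(a,1)+C(a,b)\geq 0,
\end{equation*}
and substitute $C(1,1)=1$, $C(1,b)=b$, $C(a,1)=a$ to obtain $C(a,b)\geq a+b-1$. Combining the two inequalities gives $C(a,b)\geq \max(a+b-1,0)=W(a,b)$.

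Nothing here is truly delicate; the only slight subtlety is keeping track of which rectangle $[a_1,a_2]\times[b_1,b_2]$ to plug into the $2$-increasing condition so that three of the four corner evaluations collapse via property~(1). For completeness I would also briefly verify that $W$ and $M$ are themselves copulas—checking the boundary conditions is trivial, and the $2$-increasing property for $M=\min(a,b)$ and $W=\max(a+b-1,0)$ can be checked by a short case analysis on whether the test rectangle lies entirely on one side of the kink $a=b$ (respectively $a+b=1$).
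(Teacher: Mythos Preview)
The paper does not prove this theorem; it simply quotes it from Nelsen~\cite[Thm.~2.2.3]{Nelsen2006}. Your argument is correct and is essentially the standard proof given in that reference, so there is nothing to compare against here.
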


Random variables following the Fr\'{e}chet-Hoeffding lower and upper bounds $W$ and $M$ are called countermonotonic and comonotonic random variables, respectively~\cite[Sec.~2.5]{Nelsen2006}.

Note that all of the above definitions and relations can be extended to the $n$-dimensional case where $n>2$. However, in this work, we will focus on the basic two-dimensional scenario.

\subsection{Bounds on Functions of Random Variables}
In communication {scenarios with slow fading channels}, we {often} face the situation that performance metrics, e.g., the channel capacity, are random variables due to the random nature of involved variables like the channel gain. Therefore, a probabilistic view of these quantities is of interest. One example, which gained more attention recently, is the probability distribution of these metrics, e.g., in the context of \gls{urllc}~\cite{Bennis2018}.

In the following, we will present bounds on the probability of random variables which are formed by certain binary operations of two random variables. This was first introduced for the sum of two random variables in \cite{Frank1987} but it is easily extended to more general functions~\cites[Sec.~5]{Frank1987}[Thm.~1]{Williamson1990}. {In communications, this can be used to bound the outage probability for systems with slow fading channels.}

\begin{thm}[{\cite[Thm.~1]{Williamson1990}}]\label{thm:bounds-sum-frank}
	Let $\X$ and $\Y$ be random variables over the non-negative real numbers with \glspl{cdf} $F_{\X}$ and $F_{\Y}$, respectively. Let $L$ be a binary operation that is non-decreasing in each place and continuous. The \gls{cdf} of the random variable $\Z=L(\X, \Y)$ is bounded by
	\begin{equation}\label{eq:bounds-dist-z}
	\tau_{W}(F_{\X}, F_{\Y}) \leq F_{\Z} \leq \phi_{W}(F_{\X}, F_{\Y})\,,
	\end{equation}
	with
	\begin{align}
	\tau_{C}(F_{\X}, F_{\Y})(s) &= \sup_{L(x, y)=s} C(F_{\X}(x), F_{\Y}(y))\label{eq:lower-bound-joint-cdf}\\
	\phi_{C}(F_{\X}, F_{\Y})(s) &= \inf_{L(x, y)=s} \bar{C}(F_{\X}(x), F_{\Y}(y))\label{eq:upper-bound-joint-cdf}
	\end{align}
	for a copula $C$ and its dual $\bar{C}(a, b)=a+b-C(a, b)$.
\end{thm}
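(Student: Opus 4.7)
The plan is to bound $F_{\Z}(s)=\Pr(L(\X,\Y)\leq s)$ by combining two ingredients: (i) the monotonicity of $L$, which forces the sub- and super-level sets of $L$ at level $s$ to contain, respectively be contained in, rectangular quadrants pinned at any point $(x,y)$ with $L(x,y)=s$; and (ii) Sklar's theorem (Theorem~\ref{thm:sklar}), which lets us write the joint distribution of $(\X,\Y)$ as $H(x,y)=C(F_{\X}(x),F_{\Y}(y))$ for the (possibly unknown) copula $C$ of $(\X,\Y)$, followed by the Fr\'{e}chet--Hoeffding inequality $W\leq C$ from Theorem~\ref{thm:frechet-hoeffding-bounds} to discard the dependence on the unknown $C$.

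For the lower bound I would fix any $(x,y)$ with $L(x,y)=s$. Since $L$ is non-decreasing in each argument, $\{\X\leq x\}\cap\{\Y\leq y\}\subseteq\{L(\X,\Y)\leq s\}$, so
\begin{equation*}
F_{\Z}(s)\geq H(x,y)=C(F_{\X}(x),F_{\Y}(y))\geq W(F_{\X}(x),F_{\Y}(y)).
\end{equation*}
Taking the supremum over all $(x,y)$ on the level set $L(x,y)=s$ yields $F_{\Z}(s)\geq\tau_{W}(F_{\X},F_{\Y})(s)$, matching \eqref{eq:lower-bound-joint-cdf}.

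For the upper bound I would again fix $(x,y)$ with $L(x,y)=s$ and exploit the reverse inclusion $\{\X>x\}\cap\{\Y>y\}\subseteq\{L(\X,\Y)\geq s\}$, also coming from the monotonicity of $L$. Inclusion--exclusion gives $\Pr(\X>x,\Y>y)=1-F_{\X}(x)-F_{\Y}(y)+C(F_{\X}(x),F_{\Y}(y))$, so rearranging and applying $C\geq W$ yields
\begin{equation*}
F_{\Z}(s)\leq \bar{C}(F_{\X}(x),F_{\Y}(y))\leq \bar{W}(F_{\X}(x),F_{\Y}(y)).
\end{equation*}
Taking the infimum over the level curve gives $F_{\Z}(s)\leq\phi_{W}(F_{\X},F_{\Y})(s)$, which is \eqref{eq:upper-bound-joint-cdf}.

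The main obstacle I anticipate is technical rather than conceptual: because $L$ is only non-decreasing (not strictly), the super-level inclusion used in the upper bound may pick up mass on $\{L(\X,\Y)=s\}$, so the pointwise argument only controls $F_{\Z}(s^{-})$ directly. The continuity of $L$ together with a limiting argument $s'\downarrow s$ (using the right-continuity of $F_{\Z}$), or equivalently a density argument along the level curve, should remove this gap and deliver the bounds in the form stated. Apart from this boundary care, the proof is a direct combination of Sklar's theorem with the Fr\'{e}chet--Hoeffding inequality, applied at arbitrary pinned points on the level curve $\{L=s\}$.
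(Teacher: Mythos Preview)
Your proposal is correct and follows essentially the same route as the paper's sketch: pick an arbitrary point $(x,y)$ on the level curve $L(x,y)=s$, use the monotonicity of $L$ to trap $\{L\leq s\}$ between the lower-left and (complement of the) upper-right quadrants at that point, apply Sklar's theorem and then the Fr\'{e}chet--Hoeffding lower bound $W$, and optimize over the level curve. The paper only gives this intuition and defers the details to \cite{Frank1987,Williamson1990}; your write-up is in fact more careful than the paper's, in particular in flagging the $F_{\Z}(s^{-})$ versus $F_{\Z}(s)$ boundary issue in the upper bound and the role of continuity in resolving it.
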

\begin{proof}
	We will only give an intuition on the proof in the following. The details can be found in \cite{Frank1987} and \cite{Williamson1990}.
	
	Imagine the line corresponding to $L(x, y)=s$, e.g., $x+y=s$. The probability to bound is given by the probability mass distributed below this line, e.g., $x+y\leq s$. Now pick any two points $(x_1, y_1)$ and $(x_2, y_2)$ on the line $L(x, y)=s$ and observe the following. First, the probability of interest is greater or equal to the probability mass in the area $\{(x, y)\;|\; x\leq x_1 \wedge y\leq y_1\}$ but less or equal to the probability in the area $\{(x, y)\;|\; x\geq x_2 \wedge y\geq y_2\}$. Combining this with the Fr\'{e}chet-Hoeffding bounds from Theorem~\ref{thm:frechet-hoeffding-bounds} gives \eqref{eq:bounds-dist-z}.
\end{proof}

The bounds from Theorem~\ref{thm:bounds-sum-frank} are point-wise {tight}. This means that {at each point $s$, there exist two joint distributions which achieve the lower and upper bound, respectively.} However, the joint distribution may be different for every point $s$. Details about copulas which achieve the bounds can be found in \cite[Thm.~3.2]{Frank1987} and \cite[Thm.~3]{Williamson1990}.
We only want to give the intuition behind the construction of the dependency structure. For this purpose, we now consider the lower bound. In this case, the goal is to minimize the probability mass of the joint distribution in the area $L(x, y)<s$. This is achieved by splitting the copula in a comonotonic and a countermonotonic part such that the mass equivalent to the lower bound is placed in the area $L(x, y)<s$.
An illustration of this can be found in Fig.~\ref{fig:joint-pdf-sum-uniform-lower}.

All of the above results are for the case of two random variables $\X$ and $\Y$. There are some results for the case of more than two random variables. In~\cite{Wang2013}, the authors derive bounds on the probability of the sum of $n$ random variables. However, there are some restrictions on the distributions of the variables. An application of this in the context of communications can be found in \cite{Besser2019}.

\subsection{Example {-- Uniform Marginals}}\label{sub:example-uniform}
In the following, we will give two simple examples to illustrate Theorem~\ref{thm:bounds-sum-frank}. For the marginals, we assume uniform distributions, i.e., $\X\sim\mathcal{U}[x_{\text{min}}, x_{\text{max}}]$ and $\Y\sim\mathcal{U}[y_{\text{min}}, y_{\text{max}}]$.

\begin{example}
	As a first example function, we use the sum, i.e., $L(x, y)=x+y$. This is the example considered in \cite[Sec.~4]{Frank1987}.
	For the lower bound $\underline{F}_{\X+\Y}(s)$, we have to solve the problem
	\begin{equation*}
	\underline{F}_{\X+\Y}(s) = \sup_{x+y=s} \positive{F_{\X}(x) + F_{\Y}(y) - 1}\,.
	\end{equation*}
	This gives a uniform distribution of $\X+\Y$ between $\min\left[x_{\text{min}}+y_{\text{max}}, y_{\text{min}}+x_{\text{max}}\right]$ and $x_{\text{max}}+y_{\text{max}}$.
	
	Similarly, the upper bound is derived as
	\begin{equation*}
	\overline{F}_{\X+\Y}(s) = \inf_{x+y=s} \leqone{F_{\X}(x) + F_{\Y}(y)}\,,
	\end{equation*}
	which gives a uniform distribution of $\X+\Y$ between $x_{\text{min}}+y_{\text{min}}$ and $\max\left[x_{\text{min}}+y_{\text{max}}, y_{\text{min}}+x_{\text{max}}\right]$.
	Figure~\ref{fig:example-uniform} shows the lower and upper bound on the distribution of $\X+\Y$ with $\X\sim\mathcal{U}[1, 3]$ and $\Y\sim\mathcal{U}[2, 5]$.
	This can also be found as an interactive version at \cite{BesserGitlab}.
	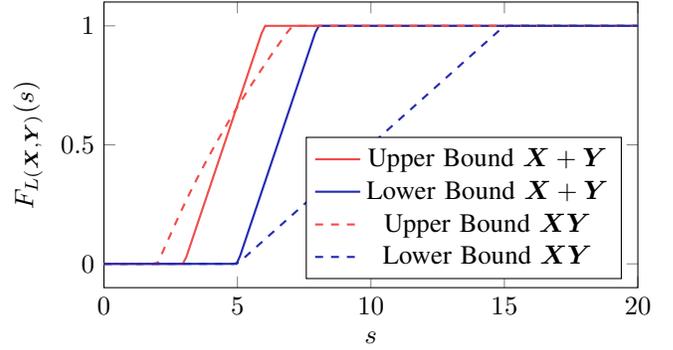
\begin{figure}
		\centering
		\begin{tikzpicture}%
\begin{axis}[
	width=.98\linewidth,
	height=.22\textheight,
	xmin=0,
	xmax=20,
	domain=0:20,
	xlabel={$s$},
	ylabel={$F_{L(\X,\Y)}(s)$},
	legend pos=south east,
]
\addplot[thick, plot0, samples=150] {max(min((x-3)/(6-3), 1), 0)}; %
\addlegendentry{Upper Bound $\X+\Y$};

\addplot[thick, plot1, samples=150] {max(min((x-5)/(8-5), 1), 0)}; %
\addlegendentry{Lower Bound $\X+\Y$};

 X~U[1, 3] and Y~U[2, 5]
\addplot[thick, plot0, dashed] table[x=s, y=upper] {data/product_uniform_upper-X1_3-Y2_5.dat};
\addlegendentry{Upper Bound $\X\Y$};

\addplot[thick, plot1, samples=100, dashed] {max(min((x-5)/(15-5), 1), 0)}; %
\addlegendentry{Lower Bound $\X\Y$};
\end{axis}

\end{tikzpicture}
		
		\vspace*{-1em}
		\caption{Upper and lower bounds on the probability of $\X+\Y<s$ and $\X\Y<s$ with $\X\sim\mathcal{U}[1, 3]$ and $\Y\sim\mathcal{U}[2, 5]$.}
		\label{fig:example-uniform}
	\end{figure}
	As described previously, the joint distributions, for which the bounds are achieved, vary for different $s$. An example of the joint \gls{pdf} for the lower bound on the sum of uniform distributions can be found in Fig.~\ref{fig:joint-pdf-sum-uniform-lower}. Additionally, the line $x+y=s$ is shown. This illustrates the intuition behind the construction of the joint distribution. The probability mass in the area $x+y<s$ is minimized and corresponds to the lower bound on the probability.
	\begin{figure}
		\centering
		\begin{tikzpicture}
\begin{axis}[
	width=.95\linewidth,
	height=.21\textheight,
	xlabel={$\X$},
	ylabel={$\Y$},
	view={0}{90},
	xmin=1,
	xmax=3,
	ymin=2,
	ymax=5,
]
    \addplot[black, very thick] coordinates {(1, 2) (1.66666, 3)};
    \addplot[black, very thick] coordinates {(1.66666, 5) (3, 3)};
    \addplot[plot0, thick, dashed] coordinates {(0, 6) (6, 0)};
    \node[anchor=north east, text=plot0, fill=white] at (axis cs: 1.5, 4.5) {$x+y=6$};
\end{axis}
\end{tikzpicture}
		
		\vspace*{-0.8em}
		\caption{The black lines indicate the support of the joint \gls{pdf} $f_{\X, \Y}$ that achieves the lower bound for the sum of $\X\sim\mathcal{U}[1, 3]$ and $\Y\sim\mathcal{U}[2, 5]$ with $s=6$. The dashed curve corresponds to the line $x+y=s$.}
		\label{fig:joint-pdf-sum-uniform-lower}
	\end{figure}
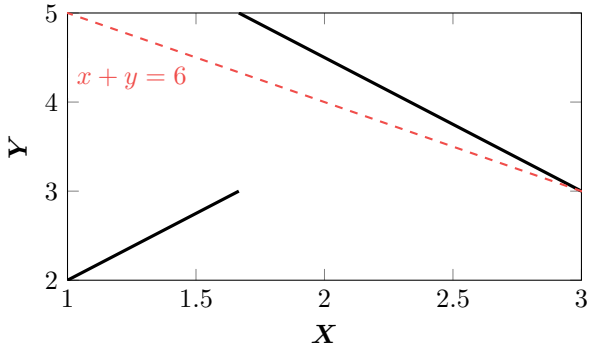
\end{example}

\begin{example}
	The next example is the product of $\X$ and $\Y$, i.e., $L(x, y)=xy$. This function is also investigated in \cite{Williamson1990}.
	The optimization problems are similar. The only difference is that they are performed over the set $xy=s$ instead of $x+y=s$.
	For the example $\X\sim\mathcal{U}[1, 3]$ and $\Y\sim\mathcal{U}[2, 5]$, the bounds are also shown in Fig.~\ref{fig:example-uniform}. They can also be found as interactive versions at \cite{BesserGitlab}.
\end{example}

\section{Bounds on the Outage Probability}\label{sec:bounds-outage-probability}

For slow fading channels, the outage probability and outage capacity region are the correct performance measures. 
Given an achievable rate $R(x)$ as a function of the random channel realization $x$, the outage probability is defined as the probability that the next channel realization $x$ results in an achievable rate lower than the transmission rate $R$, i.e.,
\begin{equation}
\Pr\left( R(\X) < R \right) \label{eq:out1}.
\end{equation} 
Based on the results from {Section~\ref{sec:probability-bounds}} and \cite{Frank1987}, we derive new lower and upper bounds on the outage probabilities for three modern communication scenarios, including the slow fading \gls{mac}, slow fading wiretap channels, and the \gls{ris} assisted slow fading channel. 
{In the following, we assume statistical \gls{csit} and perfect CSI at the receiver.}

{
\subsection{Point-to-Point Outage Probability}	
First, we will present the bounds on the outage probability for a point-to-point transmission over two slow Rayleigh fading links $h_x$ and $h_y$ and compare them to an existing correlation model from literature.
It is well-known, that the outage probability $\varepsilon$ of such a channel is given by~\cite{Tse2005}
\begin{equation}
\varepsilon = \Pr\left(\X+\Y < s\right)
\end{equation}
where we use the shorthands $\X=\abs{h_x}^2$, $\Y=\abs{h_y}^2$, and $s=\frac{2^R-1}{\xi}$ with the \gls{snr} $\xi$. For Rayleigh fading, the channel gains $\X$ and $\Y$ are exponentially distributed with shape parameters $\lx$ and $\ly$, respectively. We can now use Theorem~\ref{thm:bounds-sum-frank} to derive the general bounds on $\varepsilon$ for all possible joint distributions of $\X$ and $\Y$.}
The probability of $\X+\Y<s$ is lower bounded by a shifted exponential distribution with shape parameter $\underline{\lambda}_{x+y}=\lx\ly/(\lx+\ly)$. The upper bound on the probability is an exponential distribution with shape parameter $\overline{\lambda}_{x+y}=\min\left[\lx, \ly\right]$. The detailed calculations can be found in Appendix~\ref{sec:example-sum-exp} and online at \cite{BesserGitlab}.

{For comparison, we use the correlation model from \cite{Chen2004,Dang2017} where the channel coefficients $h_x$ and $h_y$ are given as
\begin{equation}
h_i = \left(\sqrt{1-\rho}x_i+\sqrt{\rho}x_{0}\right) + \imag \left(\sqrt{1-\rho}y_i+\sqrt{\rho}y_{0}\right)\,,
\end{equation}
where $\imag^2=-1$, $x_i, y_i\sim\mathcal{N}(0, 1/2)$ are \gls{iid}, and $x_0, y_0\sim\mathcal{N}(0, 1/2)$ are \gls{iid} and used as reference to correlate. The correlation coefficient is denoted by $\rho$. The calculations are based on \cite{Chen2004,Dang2017} and can be found online at~\cite{BesserGitlab}.
Figure~\ref{fig:comparison-literature-model} shows the outage probability for different values of $\rho$ for an \gls{snr} $\xi=\SI{10}{\decibel}$ and transmission rate $R=1$. It can be seen that there is a gap of the outage probability for this particular correlation model to the general bounds from Theorem~\ref{thm:bounds-sum-frank}. This shows that the linear correlation model underestimates the worst-case and best-case outage probabilities that could occur under arbitrary dependency.
\begin{figure}
	\centering
	\begin{tikzpicture}%
\begin{axis}[
	width=.95\linewidth,
	height=.22\textheight,
	xmin=0,
	xmax=1,
	xlabel={Correlation Coefficient $\rho$},
	ylabel={Outage Probability $\varepsilon$},
	yticklabel style={
		/pgf/number format/fixed,
		/pgf/number format/precision=3,
	},
	scaled y ticks=false,
	legend entries={{Numerical Integration}, {Monte Carlo}, {Copula-based Bounds}},
	legend style={at={(0.03,0.6)},anchor=west},
]
\addplot[plot0, thick] table[x=rho,y=integration] {data/rayleigh-comparison-corr-snr10-rate1.dat};
\addplot[plot1, only marks, mark=*] table[x=rho,y=mc] {data/rayleigh-comparison-corr-snr10-rate1.dat};
\addplot[plot2, thick, dashed] {0.09516258196404044};
\addplot[plot2, thick, dashed] {0.0};
\end{axis}
\end{tikzpicture}
	
	\vspace*{-0.8em}
	\caption{Comparison of the copula-based bounds on the outage probability to the correlation model from~\cite{Chen2004,Dang2017}. The \gls{snr} and transmission rate are $\xi=\SI{10}{\decibel}$ and $R=1$, respectively.}
	\label{fig:comparison-literature-model}
\end{figure}
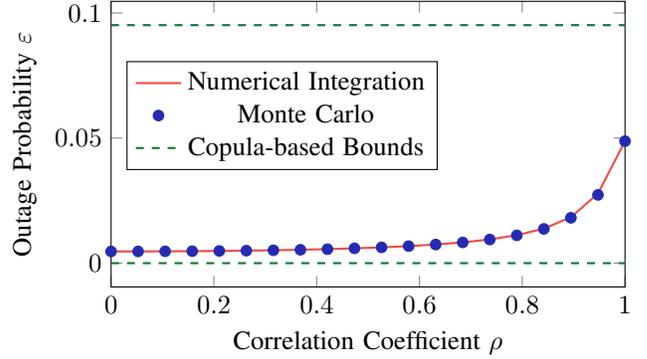
}

\subsection{MAC Outage Probability}
An outage in the two user \gls{mac} occurs, if at least one of the following events occurs
\begin{align*}
E_1:\quad &\log(1+\X)<R_1\\
E_2:\quad &\log(1+\Y)<R_2\\
E_3:\quad &\log(1+\X+\Y)<R_1+R_2\,,
\end{align*}
which is illustrated in Fig.~\ref{fig:areas-mac-outage}, where we introduce the shorthands $2^{R_1}-1 = \alpha$, $2^{R_2}-1 = \beta$, and $2^{R_1+R_2}-1 = s$.
It can easily be seen that {Theorem~\ref{thm:bounds-sum-frank}} can be applied to bound the outage probability
\begin{equation}
\varepsilon_{\text{MAC}} = \Pr\left(E_1\cup E_2\cup E_3\right)\,.
\end{equation}
When calculating the bounds based on {Theorem~\ref{thm:bounds-sum-frank}}, we need to optimize over the line $\mathcal{L}$ shown in Fig.~\ref{fig:areas-mac-outage}.
The optimization function for the lower bound {in \eqref{eq:lower-bound-joint-cdf}} is given as
\begin{equation}
g(x, y) = \positive{F_{\X}(x) + F_{\Y}(y) - 1}\,.
\end{equation}
We split $\mathcal{L}$ in three parts and optimize over them separately. The maximum values over the parts $y\geq s-\alpha$ and $x\geq s-\beta$ are given as $F_{\X}(\alpha)$ and $F_{\Y}(\beta)$, respectively.
For the middle part of $\mathcal{L}$, given by $x+y=s$, we get the optimum $y$ as
\begin{equation}
y^\star = \begin{cases}
\beta & \text{if}\quad y_{\text{opt}} \leq \beta\\
y_{\text{opt}} & \text{if}\quad \beta < y_{\text{opt}} < s-\alpha\\
s-\alpha & \text{if}\quad y_{\text{opt}} \geq s-\alpha
\end{cases}
\end{equation}
with
\begin{equation}
y_{\text{opt}} = \frac{\lx s + \log\frac{\ly}{\lx}}{\lx + \ly}\,.
\end{equation}
The minimum value is then given as $g(s-y^\star, y^\star)$.

Combining the three parts gives the final expression for the lower bound on the outage probability of the two user \gls{mac} as
\begin{equation}
\underline{\varepsilon} = \max\left[F_{\X}(2^{R_1}-1),\; g(s-y^\star, y^\star),\; F_{\Y}(2^{R_2}-1)\right]
\end{equation}

For the upper bound {in \eqref{eq:upper-bound-joint-cdf}}, we use the same general idea. However, we now want to minimize the optimization function
\begin{equation*}
h(x, y) = \leqone{F_{\X}(x) + F_{\Y}(y)}\,.
\end{equation*}
The upper bound is then given as
\begin{equation}
\overline{\varepsilon} = \min\left[{F_{\X}(\alpha) + F_{\Y}(s-\alpha), F_{\X}(s-\beta) + F_{\Y}(\beta), 1}\right].
\end{equation}

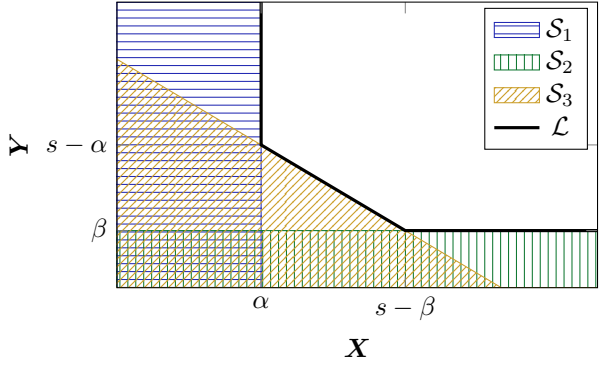
\begin{figure}
	\centering
		\begin{tikzpicture}
\begin{axis}[
	width=.9\linewidth,
	height=.22\textheight,
	xmin=0,
	xmax=5,
	ymin=0,
	ymax=5,
	axis on top,
	xlabel={$\X$},
	ylabel={$\Y$},
	xtick={1.5, 3},
	ytick={1, 2.5},
	xticklabels={{$\alpha$},{$s-\beta$}},
	yticklabels={{$\beta$},{$s-\alpha$}},
]	
	\addplot[plot1,pattern=horizontal lines, pattern color=plot1, domain=0:1.5, area legend] {5}\closedcycle;
	\addlegendentry{$\mathcal{S}_{1}$};
	
	\addplot[plot2,pattern=vertical lines, pattern color=plot2, domain=0:5, area legend] {1}\closedcycle;
	\addlegendentry{$\mathcal{S}_{2}$};
	
	\addplot[plot3,pattern=north east lines, pattern color=plot3, area legend] {-x+4}\closedcycle;
	\addlegendentry{$\mathcal{S}_{3}$};

	\addplot[very thick, black] table {
		1.5	5
		1.5	2.5
		3	1
		5	1
	};
	\addlegendentry{$\mathcal{L}$};

\end{axis}
\end{tikzpicture}

	\vspace*{-1em}
	\caption{Integration areas for the outage probability of a two user \gls{mac}. The regions $\mathcal{S}_i$ correspond to the individual events $E_i$.}
	\label{fig:areas-mac-outage}
\end{figure}

\subsection{Secrecy Outage Probability}
The outage probability in slow fading wiretap channels is an important measure to take both outages due to reliability drops as well as information leakage to the unknown eavesdropper into account. In \cite{Besser2020tcom}, we derive worst-case and best-case outage probabilities for the scenarios with perfect \gls{csit} about the legitimate channel and without this information. The channel to the eavesdropper is always unknown.

\subsection{RIS Channel}
As shown in {Section~\ref{sub:example-uniform}}, we can also apply {Theorem~\ref{thm:bounds-sum-frank}} for other functions than the sum, e.g., for the product of random variables over the non-negative reals. In wireless communications this problem arises as the product of channel gains, e.g., in the context of \gls{ris}~\cite{Bjornson2020}.
For illustration, we present a simplified example in the following. Assume that $\X$ and $\Y$ represent (dependent) channel gains of different channels which are exponentially distributed, i.e., $\X \sim \exp(\lx), \Y \sim \exp(\ly)$. The achievable rate expression for the special case without direct link $\beta_{sd}=0$ and with $N=1$ in \cite[Lemma 1, $R_{IRS}$ in (12)]{Bjornson2020} results in 
\begin{equation}
R(\X, \Y) = \log_2\left(1 + \xi \X \Y \right)\, ,
\end{equation}
with \gls{snr} $\xi$.  
The outage probability is defined according \eqref{eq:out1}. In the case of independent channels, this is given as
\begin{equation*}
\varepsilon_{\text{ind}} = 1 - 2 \sqrt{\lx \ly s}\ K_{1}\!\left(2 \sqrt{\lx \ly s}\right)\,,
\end{equation*}
where $K_{1}$ is the modified Bessel function of second kind and order $1$ \cite{Abramowitz1970} and $s=2^R-1$.
The lower and upper bounds are given according to {Theorem~\ref{thm:bounds-sum-frank}} as
\begin{align*}
\underline{\varepsilon} &= \sup_{xy=s} \positive{F_{\X}(x) + F_{\Y}(y) - 1}\\
\overline{\varepsilon} &= \inf_{xy=s} \leqone{F_{\X}(x) + F_{\Y}(y)}\,,
\end{align*}
respectively. We solve these problems numerically. 
\begin{figure}
	\centering
	\begin{tikzpicture}%
\begin{axis}[
	width=.95\linewidth,
	height=.22\textheight,
	xlabel={Transmission Rate $R$},
	ylabel={Outage Probability $\varepsilon$},
	xmin=1e-3,
	xmax=10,
	ymin=1e-5,
	ymax=1,
	xmode=log,
	ymode=log,
	legend pos=south east,
]
\addplot[plot0, mark=*, mark repeat=2] table[x=rate,y=lower] {data/bounds-product-rayleigh-lx1-ly1.dat};
\addlegendentry{Lower Bound};

\addplot[plot1, mark=square*, mark repeat=2] table[x=rate,y=indep] {data/bounds-product-rayleigh-lx1-ly1.dat};
\addlegendentry{Independent Channels};

\addplot[plot2, mark=triangle*, mark repeat=2] table[x=rate,y=upper] {data/bounds-product-rayleigh-lx1-ly1.dat};
\addlegendentry{Upper Bound};

\end{axis}
\end{tikzpicture}
	
	\vspace*{-.8em}
	\caption{Bounds on the outage probability for \gls{ris} channel for \gls{snr} $\xi=\SI{0}{\decibel}$ and $\lx=\ly=1$.}
	\label{fig:outage-product-rayleigh}
	\vspace*{-0.5em}
\end{figure}
The three different cases are shown in Fig.~\ref{fig:outage-product-rayleigh} for varying values of the transmission rate $R$ and with \gls{snr} $\xi=1$. The figure, together with the numerical calculations, can be found as an interactive version at~\cite{BesserGitlab}.
It can be seen that the achievable rate depends significantly on the dependency of the two channel realizations from source to \gls{ris} and from \gls{ris} to destination. For a rate of $0.1$ bits per second, the best outage probability is below \SI{1}{\percent} while in the worst case it is about \SI{50}{\percent}.

\subsection{Extension to More Than Two Links/Antennas}
The results on the outage performance presented in this section are all based on the complete characterization of the worst- and best-case dependency in {Theorem~\ref{thm:bounds-sum-frank}}. The extension to more than two random variables, $n > 2$, is not straightforward. Mainly, because the lower Fr\'{e}chet-Hoeffding bound does not result in valid copula, i.e., a valid joint distribution \cite{Nelsen2006}. 

{However, it is possible to derive upper and lower bounds on the outage probability for $n>2$ for some special cases. In \cite{Besser2020icc}, we derive the outage probability bounds for $n$ random variables in the context of diversity. The constraint in this case is that the fading distributions have to have monotone densities. The derivation of the bounds is then based on {joint mixability}~\cite{Wang2013,Wang2016}.}
It is also possible to calculate the bounds numerically by a rearrangement algorithm~\cite{Puccetti2012}. This algorithms supports an arbitrary number of random variables with arbitrary distributions.
\section{Conclusion}\label{sec:conclusion}

{This second part dealt with the outage probability, which is a typical performance metric used for scenarios with slow-fading channels. We considered various communication settings, e.g., the \gls{mac} and the wiretap channel, and showed that the joint distribution of the channels has a significant impact on the outage performance.

A more extensive conclusion and outlook for future work and applications can be found at the end of the first part of this letter~\cite{Besser2020part1}.}

\appendices
\section{Sum of Exponentially Distributed Random Variables}\label{sec:example-sum-exp}
We start with the lower bound
\begin{equation*}
\Pr(\X+\Y<s)\leq \underline{F}_{\X, \Y}(s) = \sup_{x+y=s} \positive{F_{\X}(x)+F_{\Y}(y)-1}
\end{equation*}
which we rewrite as
\begin{equation}\label{eq:example-sum-exp-optimization-lower}
\underline{F}_{\X, \Y}(s) = \sup_{x\geq 0} \positive{g(x)}
\end{equation}
with the shorthand $g(x) = F_{\X}(x)+F_{\Y}(s-x)-1$.
First, we take a look at the boundaries
\begin{align*}
\positive{g(0)} &= \positive{F_{\Y}(s) - 1} = 0\\
\lim\limits_{x\to\infty} g(x) &= 0\,.
\end{align*}
Next, we need the first two derivatives of $g$ in order to solve the optimization problem. For the case of exponentially distributed $\X\sim\exp(\lx)$ and $\Y\sim\exp(\ly)$, these are
\begin{align*}
g^\prime(x) = \frac{\partial g}{\partial x} &= \frac{1}{\alpha}\exp\left(-\frac{x}{\alpha}\right) - \frac{1}{\beta}\exp\left(-\frac{s-x}{\beta}\right)\\
g''(x) = \frac{\partial^2 g}{\partial x^2} &= \frac{-1}{\alpha^2}\exp\left(-\frac{x}{\alpha}\right) - \frac{1}{\beta^2}\exp\left(-\frac{s-x}{\beta}\right)\,,
\end{align*}
where we introduce $\alpha=1/\lx$ and $\beta=1/\ly$.

For the stationary point $x^\star$, we have $g^\prime(x^\star)=0$. Solving this for our example gives
\begin{equation*}
x^\star = \frac{\left(\frac{s}{\beta}+\log\left(\frac{\beta}{\alpha}\right)\right)\alpha\beta}{\alpha+\beta}\,.
\end{equation*}
Since $g''(x^\star)<0$, we know that this point is a maximum, which we need for the optimization problem in \eqref{eq:example-sum-exp-optimization-lower}.
Therefore, we can now evaluate $\positive{g(x^\star)}$ in order to get the lower bound on $F_{\X+\Y}$. In our example, this is
\begin{equation*}
\positive{g(x^\star)} = \positive{1 - \exp\left(-\frac{s-k}{\alpha+\beta}\right)} = \underline{F}_{\X+\Y}(s)\,,
\end{equation*}
with $k=(\alpha+\beta)\log(\alpha+\beta) - \beta\log\beta - \alpha\log\alpha$.
This shows that the lower bound of $\X+\Y$ is a shifted exponential distribution with shape parameter $\lambda_{x+y}=1/(\alpha+\beta)$. This is confirmed by the result from \cite[Sec.~4]{Frank1987}.

For the upper bound, we have to solve the similar problem
\begin{equation*}
\overline{F}_{\X, \Y}(s) = \inf_{x\geq 0} \leqone{F_{\X}(x)+F_{\Y}(s-y)}\,.
\end{equation*}
Starting with the limits,
\begin{align*}
F_{\X}(0) + F_{\Y}(s) &= F_{\Y}(s) \leq 1\\
\lim\limits_{x\to\infty} F_{\X}(x) + F_{\Y}(s-x) &= 1\,
\end{align*}
we obtain $F_{\Y}(s)$ as an initial lower bound. Next, we have another lower bound at $x=s$ which is equal to $F_{\X}(s)$. In addition, we observe that $F_{\X}(x)+F_{\Y}(s-x)>F_{\X}(s)$ for $x>s$.
Since the derivatives are equivalent to the ones of $g$, we know that the optimization function can only have a maximum in $0<x<s$. Combining all of this, we get $\overline{F}_{\X+\Y}(s) = \min\left[F_{\X}(s), F_{\Y}(s)\right]$,
which corresponds to an exponential distribution with shape parameter $\lambda_{x+y}=1/(\max\left[\alpha, \beta\right])$.

\printbibliography
\end{document}